\documentclass[12pt]{iopart}

\usepackage{amsthm,iopams}
\usepackage{setstack}
\usepackage{graphicx}

\newtheorem{thm}{Theorem}[section]
\newtheorem{cor}[thm]{Corollary}
\newtheorem{lem}[thm]{Lemma}

\theoremstyle{definition}

\begin{document}

\title{Dynamical constraints from field line topology in magnetic flux tubes}

\author{A R Yeates, G Hornig}

\address{Division of Mathematics, University of Dundee, Dundee, DD1 4HN, UK}
\eads{\mailto{anthony@maths.dundee.ac.uk}, \mailto{gunnar@maths.dundee.ac.uk}}

\begin{abstract}
A topological constraint on the dynamics of a magnetic field in a flux tube arises from the fixed point indices of its field line mapping. This can explain unexpected behaviour in recent resistive-magnetohydrodynamic simulations of magnetic relaxation. Here we present the theory for a general periodic flux tube, representing, for example, a toroidal confinement device or a solar coronal loop. We show how an ideal dynamics on the side boundary of the tube implies that the sum of indices over all interior fixed points is invariant. This constraint applies to any continuous evolution inside the tube, which may be turbulent and/or dissipative. We also consider the analogous invariants obtained from periodic points (fixed points of the iterated mapping). Although there is a countably infinite family of invariants, we show that they lead to at most two independent dynamical constraints. The second constraint applies only in certain magnetic configurations. Several examples illustrate the theory.
\end{abstract}
\pacs{47.10.Fg, 52.30.Cv, 52.35.Vd, 52.65.Kj, 96.60.Hv}
\submitto{J. Phys. A: Math. Theor.}

\section{Introduction} \label{sec:intro}

Magnetic flux tubes are basic structural features in astrophysical magnetic fields like the Sun's atmosphere, or may represent the toroidal magnetic field in thermonuclear confinement devices such as tokamaks \cite{goedbloed2004}. Within a single flux tube, the magnetic sub-structure may be extremely complex, particularly when the plasma undergoes a turbulent evolution. A useful technique for characterizing the magnetic field in such a tube is the field line mapping from one end to the other (defined in \sref{sec:notation}). This reduces the field to a two-dimensional function that is simpler to analyze, yet retains much important information. Field line mappings have been used in solar physics, where strong mapping gradients are a likely indicator for preferred locations of magnetic reconnection \cite{titov2002}, and to model laboratory devices with toroidal geometry \cite{morrison2000, borgogno2008}. The latter have exploited the mathematical interpretation of a flux tube as a dynamical system in two space dimensions, with the role of ``time'' played by the direction along the tube. Powerful techniques from dynamical systems theory may thus be applied to study the three-dimensional magnetic structure at a given time snapshot. In this paper, we show how a basic topological property of the field line mapping---the fixed point index---may lead to a physical constraint on the \emph{time evolution} of a magnetic flux tube.

This topological constraint applies under any evolution where field lines on the side boundary of the flux tube (which is by definition a magnetic surface) undergo an ideal evolution that preserves their connectivity \cite{newcomb1958}. Essentially, the interior dynamics are restricted because the total index of interior fixed points is determined by that of fixed points on the boundary (\sref{sec:fixed}). The constraint does not depend on the nature of the dynamics in the tube, which may follow the equations of ideal-magnetohydrodynamics (MHD), resistive MHD, or any other continuous evolution of the magnetic field. However, the work was initially motivated by resistive-MHD simulations of magnetic loops in the solar corona (atmosphere). The coronal magnetic field builds up stress and energy as it is twisted by motion of its footpoints in the solar interior. The key question is: what will be the final state of its subsequent turbulent relaxation? Knowing the appropriate constraints that apply during this relaxation would allow us to predict the final state, and hence place limits on the amount of magnetic energy that can be dissipated during the relaxation. This has direct implications for understanding how the magnetic field heats the corona to extreme temperatures \cite{heyvaerts1984}.

\begin{figure}
\begin{center}
\includegraphics[width=0.4\textwidth]{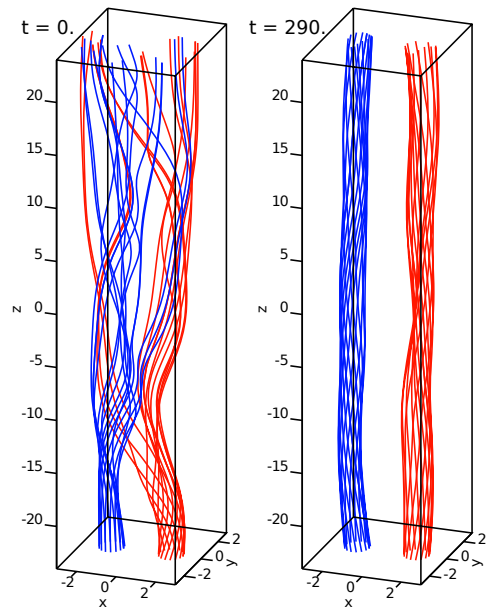}
\caption{Initial and relaxed magnetic fields in the ``braided magnetic field'' simulation (see \cite{wilmotsmith2010,pontin2010} for details and background information). Only selected magnetic field lines are shown (the field fills the whole box), traced from the same starting points on the lower boundary in each case.}
\label{fig:braid}
\end{center}
\end{figure}

The appropriate constraints for a resistive-MHD relaxation remain under debate \cite{bhattacharjee1982}. The theory of Taylor \cite{taylor1974}, which predicts relaxation to a so-called linear force-free field, can explain the final state in certain laboratory experiments, and has been conjectured to apply to the solar corona \cite{heyvaerts1984}. However, its applicability to astrophysical magnetic fields like the corona is unclear, with recent numerical simulations of a solar coronal loop \cite{wilmotsmith2010,pontin2010} finding a final state in conflict with expectations from Taylor theory. \Fref{fig:braid} shows one of these simulations, which we refer to as the ``braided magnetic field'' (due to the initial structure) and use as an illustrative example throughout this paper. We have recently shown that the index constraint can explain why these simulations are unable to reach the predicted Taylor state \cite{yeates2010}.

In this paper, we develop fully the initial idea of \cite{yeates2010} for a general flux tube, explicitly accounting for the (important) influence of the side boundary and exploring the role of higher periodic points. The latter are fixed points of \emph{iterations} of the field line mapping. To allow for these, we focus on a flux tube with periodic boundary conditions (i.e., the distribution of the normal magnetic field is the same across both ends of the flux tube). In that case, the tube end is a Poincar\'{e} section in the dynamical system analogy, and the field line mapping is a return map. Periodicity is natural if the cylinder represents a toroidal geometry, but is also typical in numerical simulations of solar coronal magnetic loops where the initial conditions are given by an analytical equilibrium and the ends remain ``line-tied'' throughout the evolution. In fact, we find that in most cases the higher periodic points do not impose any additional constraint on the dynamics. This result is formulated in Theorem \ref{prop:main} and proved in \sref{sec:periodic}.

\section{The magnetic field line mapping} \label{sec:notation}

\begin{figure}
\begin{center}
\includegraphics[width=0.2\textwidth]{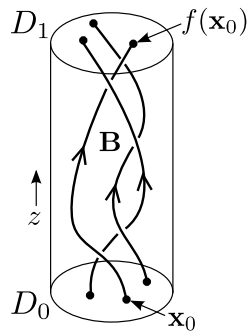}
\caption{The physical setup: a magnetic field ${\bf B}$ on a cylinder of radius $R$.}
\end{center}
\label{fig:tube}
\end{figure}

We shall consider magnetic fields ${\bf B}({\bf x})=B^r{\bf e}_r+B^\phi{\bf e}_\phi + B^z{\bf e}_z$ on the cylinder $\{0\leq r\leq R, 0,\leq \phi \leq 2\pi, 0\leq z \leq 1\}$, sketched in \fref{fig:tube}. The lower and upper boundary discs are denoted $D_0$ and $D_1$ respectively. We impose the following conditions:
\numparts
\begin{eqnarray}
{\bf B} \neq 0,\label{eqn:b0}\\
B^r|_{r=R} = 0, \label{eqn:brR}\\
B^z|_{D_0} = B^z|_{D_1} > 0. \label{eqn:bzs0}
\end{eqnarray}
\endnumparts
In sections \ref{sec:fixed} and \ref{sec:periodic} we will consider time-dependent magnetic fields ${\bf B}({\bf x},t)$, and these conditions will be applied at all times. In particular, \eref{eqn:bzs0} allows the magnetic field on $D_0$ or $D_1$ to be time-dependent, providing that $B_z$ remains periodic. Static ``line-tying'' is a special case.
Since $\nabla\cdot{\bf B}=0$, the magnetic flux is the same through any horizontal cross-section of the cylinder, which therefore represents a magnetic flux tube. Moreover, all magnetic field lines (integral curves of ${\bf B}$) connect $D_0$ and $D_1$ in the same direction, and can be parametrized using the coordinate $z$. The field line starting at ${\bf x}_0\in D_0$ is denoted ${\bf F}({\bf x}_0,z)$ and found by integrating
\begin{equation}
\frac{\partial{\bf F}({\bf x}_0,z)}{\partial z} = \frac{{\bf B}\big({\bf F}({\bf x}_0,z)\big)}{B^z\big({\bf F}({\bf x}_0,z)\big)}
\end{equation}
from $z=0$ to $z=1$, with ${\bf F}({\bf x}_0,0) = {\bf x}_0$. Tracing field lines from every ${\bf x}_0$ yields a mapping $f:D_0\rightarrow D_1$ given by
\begin{equation}
f({\bf x}_0) := {\bf F}({\bf x}_0,1),
\end{equation}
called the \emph{field line mapping} of ${\bf B}$. Since we are interested in physical magnetic fields with finite dissipation, we assume that ${\bf B}$ is differentiable, and hence that $f$ is a diffeomorphism of the disc to itself. It follows from the fact that $f$ is a homeomorphism that the boundary circle $\partial D_0$ is mapped to itself.

From $\nabla\cdot{\bf B}=0$, it may be shown that the Jacobian determinant of $f$ is
\begin{equation}
|Df|({\bf x}_0) = \frac{B^z({\bf x}_0)}{B^z(f({\bf x}_0))},
\label{eqn:df}
\end{equation}
where $Df^{ij}:=\partial f^i/\partial x^j$. Condition \eref{eqn:bzs0} then implies that $|Df|>0$, so the mapping is orientation-preserving. In those fields where $B^z(f({\bf x}_0))=B^z({\bf x}_0)$ for every field line, $f$ is also area-preserving. Such symplectic maps have been widely studied as models for magnetic fields in the fusion context \cite{meiss1992}, but a general field line mapping is only symplectic when written in appropriate canonical coordinates \cite{cary1983}. Although most of our examples do retain this property in physical coordinates, the results in this paper do not depend on it, requiring only conditions \eref{eqn:b0}-\eref{eqn:bzs0}.

\section{Fixed points} \label{sec:fixed}

To analyse the local behaviour of the field line mapping $f$ we introduce, on the disc $D_1$, the mapping
\begin{equation}
{\bf v}_f({\bf x}_0) = f({\bf x}_0) - {\bf x}_0.
\end{equation}
Although ${\bf v}_f$ is not a proper vector field, we can treat it like one for the purposes of this paper. The main use of ${\bf v}_f$ will be to define the fixed point index, but we note first that it leads to a simple but effective ``colour map'' technique for visualising the 2-d mapping $f$ \cite{polymilis2003}. In Cartesian coordinates, each point ${\bf x}_0\in D_0$  is assigned one of four colours, according to the two Cartesian components of ${\bf v}_f$:
\begin{eqnarray*}
\textrm{red if } &v_f^x > 0 &\textrm{ and } v_f^y > 0,\\
\textrm{yellow if } &v_f^x < 0 &\textrm{ and } v_f^y > 0,\\
\textrm{green if }  &v_f^x < 0 &\textrm{ and } v_f^y < 0,\\
\textrm{blue if }  &v_f^x > 0 &\textrm{ and } v^y_f < 0.
\end{eqnarray*}

As a simple example, \fref{fig:elliptic} shows the ``uniform twist'' magnetic field
\begin{equation}
{\bf B} = {\bf B}^{\rm twist} + {\bf e}_z,\qquad
{\bf B}^{\rm twist}=-\alpha y{\bf e}_x + \alpha x{\bf e}_y,
\label{eqn:unitwist}
\end{equation}
for two values of the twist-angle parameter $\alpha$. The arrows in \fref{fig:elliptic} show the direction of ${\bf v}_f$, which for this field may be shown to be
\numparts
\begin{eqnarray}
v_f^x= (\cos\alpha) x_0 -(\sin\alpha) y_0 -x_0,\\
v_f^y=(\sin\alpha) x_0 + (\cos\alpha) y_0 - y_0.
\end{eqnarray}
\endnumparts

\begin{figure}
\begin{center}
\includegraphics[width=0.55\textwidth]{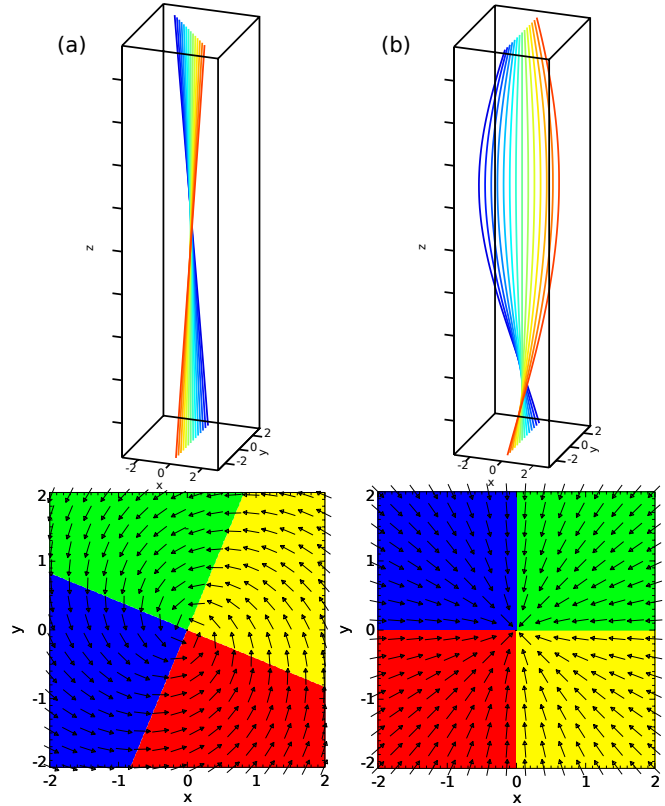}
\caption{The uniform twist field \eref{eqn:unitwist} with (a) $\alpha=\pi/4$ and (b) $\alpha=\pi$. The vectors show the direction of ${\bf v}_f$, normalised to unit length for clarity. Here $R=2$.}
\label{fig:elliptic}
\end{center}
\end{figure}

A more complex example is given by the braided magnetic field simulation introduced in \sref{sec:intro}. A sequence of colour maps obtained at different times during this simulation is shown in  \fref{fig:braid_cmaps}. It neatly reveals how the field line mapping is simplified during the relaxation.

\begin{figure}
\begin{center}
\includegraphics[width=0.85\textwidth]{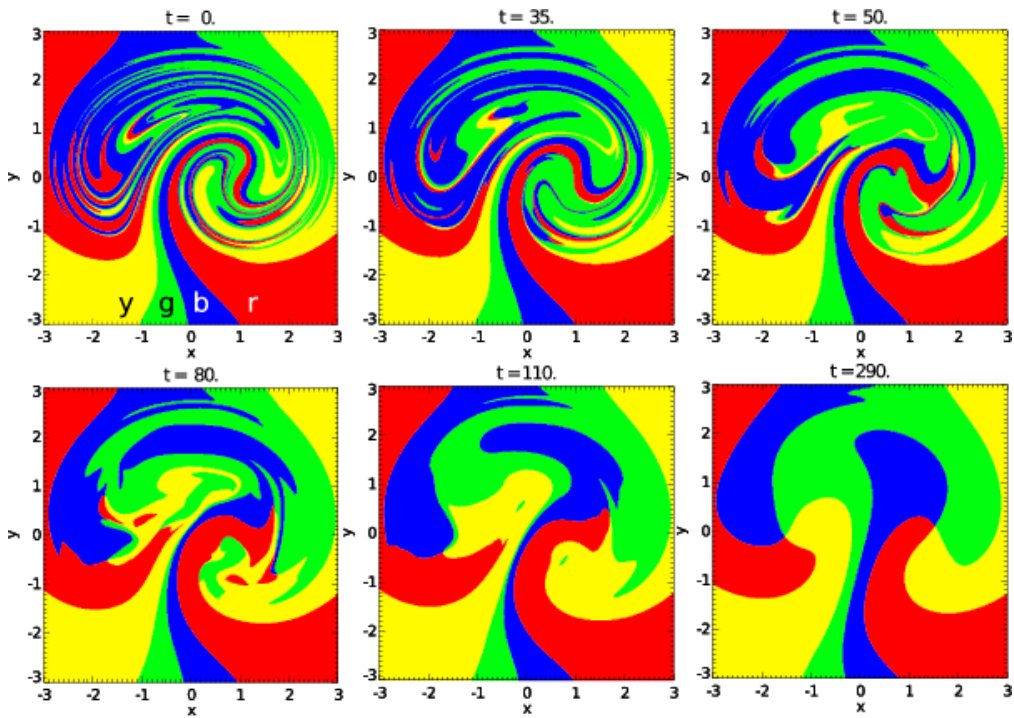}
\caption{Time sequence of colour maps from the braided magnetic field simulation (see \cite{pontin2010} for details of the simulation). Letters y, g, b, r indicate the colours yellow, green, blue, and red (for readers viewing in greyscale).}
\label{fig:braid_cmaps}
\end{center}
\end{figure}

Central to our analysis are the \emph{fixed points} where $f({\bf x}_0)={\bf x}_0$, or equivalently ${\bf v}_f=0$. Isolated fixed points show in the colour map as intersections of the curves $v_f^x=0$ and $v_f^y=0$, i.e., points where all four colours meet.  Indeed, the colour map was introduced to aid the numerical search for fixed points \cite{polymilis2003}. The uniform twist field \eref{eqn:unitwist} has a single fixed point at the origin $x_0=y_0=0$, while the braided magnetic field has multiple fixed points at different times.

The reason for focusing on fixed points of $f$ is simple: they are \emph{topological} properties of the mapping, and hence of ${\bf B}$. In an ideal time evolution, the fixed points are invariants. But even under a non-ideal evolution they cannot be arbitrarily created or destroyed. Importantly, we can make a stronger statement not just about the fixed points in isolation but about the local structure of the mapping around them. This is expressed through a topological property called the (Poincar\'{e}/-Hopf) \emph{index} of each fixed point.

\subsection{Index of a fixed point}

The index of a fixed point of $f$ depends on the local mapping around the fixed point, so to assign an index to fixed points lying on the boundary $\partial D_0$, the mapping $f$ must be extended outside the disc $r=R$. To this end, following \cite{brown1994}, define the extended mapping
\begin{equation}
\tilde{f}({\bf x}_0)  = 
\cases{f({\bf x}_0) & \textrm{ if $|{\bf x}_0|\leq R$,}\\
f\big(R{\bf x}_0/|{\bf x}_0|\big) & \textrm{ if $|{\bf x}_0| > R$,}\\}
\label{eqn:ftilde}
\end{equation}
which is continuous and introduces no additional fixed points. (An alternative extension is used in \cite{ma2001}.)

Let $\Gamma$ be a closed curve enclosing an isolated fixed point ${\bf x}_0\in D_0$ but enclosing no other fixed point. The \emph{index} of ${\bf x}_0$, denoted $\textrm{ind}_{{\bf x}_0}f$, is the winding number of the closed curve ${\bf v}_{\tilde{f}}(\Gamma)$ about the origin \cite{brown1994}. It is an integer and so invariant under homotopy: it is a \emph{topological} property of the local mapping around the fixed point. The extension $\tilde{f}$ means that this definition applies both to interior fixed points and to those on $\partial D_0$. Note that there are a number of equivalent definitions of the index \cite{simon1974}. The winding number may be expressed in integral form as
\begin{equation}
\textrm{ind}_{{\bf x}_0}f = \frac{1}{2\pi}\oint_\Gamma \rmd\left[\textrm{arctan}\left(\frac{v_{\tilde{f}}^y}{v_{\tilde{f}}^x}\right)\right] = \frac{1}{2\pi}\oint_\Gamma\frac{v_{\tilde{f}}^x\,\rmd v_{\tilde{f}}^y - v_{\tilde{f}}^y\,\rmd v_{\tilde{f}}^x}{v_{\tilde{f}}^2},
\label{eqn:kronecker}
\end{equation}
which is the two-dimensional case of the Kronecker integral \cite{polymilis2003}. Evaluated around a closed curve enclosing more than one fixed point, the integral yields the algebraic sum of their indices.

The fixed point in the uniform twist example has index $+1$. As another simple example, \fref{fig:hyp} shows the ``hyperbolic'' magnetic field
\numparts
\begin{eqnarray}
\fl {\bf B} = {\bf B}^{\rm hyp}(r,\phi) + {\bf e}_z, \label{eqn:hyp}\\
\fl {\bf B}^{\rm hyp}(r,\phi) = 2\cos(2\phi)\sin\left(\frac{\pi r}{2}\right){\bf e}_r - \left[ \sin\left(\frac{\pi r}{2}\right) + \frac{\pi r}{2}\cos\left(\frac{\pi r}{2}\right)\right]\sin(2\phi){\bf e}_\phi,
\end{eqnarray}
\endnumparts
comprising a uniform vertical component $B^z=1$ and horizontal components given by ${\bf B}^{\rm hyp}$. The extension of the colour map outside the disc $R=2$ (using $\tilde{f}$) is also shown. Here $f$ has five interior fixed points and four on the boundary. The fixed point at $x_0=y_0=0$ has index $-1$ (hence the name ``hyperbolic''), while the remaining four interior fixed points all have index $+1$. Of the boundary fixed points, the pair at $x_0=0$ have index $0$, while those at $y_0=0$ have index $-1$.

\begin{figure}
\begin{center}
\includegraphics[width=0.42\textwidth]{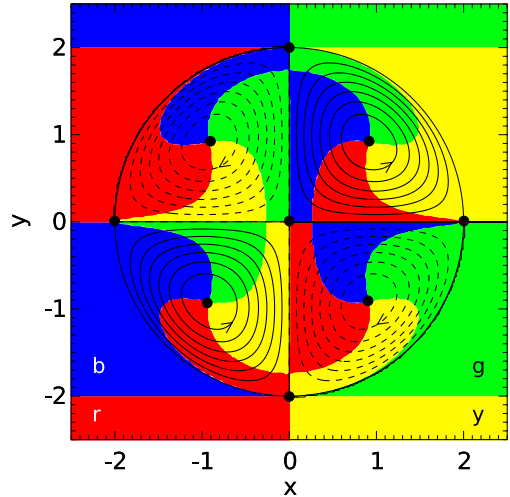}
\caption{Colour map of the ``hyperbolic'' magnetic field \eref{eqn:hyp}, with projected field lines showing the horizontal component ${\bf B}^{\rm hyp}$ (not the vector field ${\bf v}_f$). Black circles identify the nine fixed points of the field line mapping.}
\label{fig:hyp}
\end{center}
\end{figure}

\subsection{Global constraint}

The utility of the fixed point index is encapsulated in the following global result connecting the indices of all of the fixed points of $f$ \cite{hopf1929,brown1994}. We denote the set of fixed points of $f$ by $\textrm{Fix}(f)$.

\begin{thm}[Hopf]
If $M$ is the closure of an open, connected subset of the plane whose boundary is a union of smooth simple closed curves and $f,g:M\rightarrow M$ are homotopic maps with both $\textrm{Fix}(f)$ and $\textrm{Fix}(g)$ finite, then
\[
\sum_{{\bf p}\in\textrm{Fix}(f)}\textrm{ind}_{\bf p}f = \sum_{{\bf p}\in\textrm{Fix}(g)}\textrm{ind}_{\bf p}g.
\]
\label{thm:hopf}
\end{thm}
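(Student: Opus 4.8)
The plan is to reduce the total fixed-point index to a single winding number evaluated just outside the boundary of $M$, and then to exploit the fact that a winding number is an integer that varies continuously under a homotopy, hence is constant.

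First I would use the additivity of the Kronecker integral \eref{eqn:kronecker} noted in the text: integrated around a single closed curve enclosing every fixed point of $f$ (and passing through none of them), the winding-number integrand of ${\bf v}_{\tilde{f}}$ returns the algebraic sum $\sum_{{\bf p}\in\textrm{Fix}(f)}\textrm{ind}_{\bf p}f$. To obtain such a curve even when $f$ has fixed points on $\partial M$, I would use the extension $\tilde{f}$ of \eref{eqn:ftilde}. Its essential property is that it has \emph{no} fixed points strictly outside $M$: for a point ${\bf x}$ beyond the boundary, $\tilde{f}({\bf x})$ is the image of the nearest boundary point and so lies inside $M$, whence $\tilde{f}({\bf x})\neq{\bf x}$. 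I may therefore take the contour to be a curve $\Gamma$ lying just outside $M$ (just outside each boundary component, in the general case); it encloses all interior and boundary fixed points, ${\bf v}_{\tilde{f}}$ is nowhere zero on it, and the collar between $\partial M$ and $\Gamma$ contains no fixed point, so the enclosed index is unchanged. Thus $\sum_{{\bf p}\in\textrm{Fix}(f)}\textrm{ind}_{\bf p}f$ equals the winding number of ${\bf v}_{\tilde{f}}$ around $\Gamma$, and likewise for $g$.

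Next I would set up the homotopy. Let $H(\cdot,t)$ interpolate between $f$ and $g$ through maps $M\rightarrow M$, and extend each $H(\cdot,t)$ radially exactly as in \eref{eqn:ftilde} to obtain $\tilde{H}(\cdot,t)$. Writing ${\bf v}_t({\bf x})=\tilde{H}({\bf x},t)-{\bf x}$, the same argument as above shows ${\bf v}_t\neq 0$ everywhere on the exterior contour $\Gamma$ for \emph{every} $t$, since $\tilde{H}({\bf x},t)\in M$ while ${\bf x}\in\Gamma$ lies outside $M$. Hence $\hat{\bf v}_t:={\bf v}_t/|{\bf v}_t|$ is a well-defined continuous family of maps $\Gamma\rightarrow S^1$, that is, a genuine homotopy. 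I would then invoke homotopy invariance of the winding number: the integral \eref{eqn:kronecker} evaluated for ${\bf v}_t$ on $\Gamma$ is integer-valued and depends continuously on $t$, so it is constant on $[0,1]$. Its value at $t=0$ is $\sum_{{\bf p}\in\textrm{Fix}(f)}\textrm{ind}_{\bf p}f$ and at $t=1$ is $\sum_{{\bf p}\in\textrm{Fix}(g)}\textrm{ind}_{\bf p}g$, which proves the identity.

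The crux---and the step I expect to be the main obstacle---is controlling fixed points that sit on or migrate across $\partial M$ during the homotopy. A naive homotopy of the boundary maps ${\bf v}_f/|{\bf v}_f|$ and ${\bf v}_g/|{\bf v}_g|$ can fail precisely because ${\bf v}_t$ may vanish somewhere on $\partial M$ at intermediate times, which would let the winding number jump. The extension $\tilde{f}$ circumvents this entirely by relocating the computation to the exterior contour $\Gamma$, where no fixed point of any $\tilde{H}(\cdot,t)$ can ever occur. For a general $M$ whose boundary is a union of smooth simple closed curves (rather than the single circle of the disc application), the remaining technical point is to define this outward extension intrinsically---via a retraction of a collar neighbourhood of $\partial M$ onto the boundary---and to verify that smoothness of the boundary curves makes that retraction, and hence $\Gamma$, well defined.
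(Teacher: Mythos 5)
The paper does not actually prove Theorem \ref{thm:hopf}: it is quoted as a classical result, cited to Hopf and to Brown and Greene, and the text only uses its consequence (homotopy to a constant map gives index sum $1$). So there is no in-paper proof to compare against; judged on its own, your argument is a correct, self-contained proof, and it is essentially the standard degree-theoretic argument underlying the cited sources, assembled from the paper's own tools. Both pillars are sound: (i) since the extension $\tilde{f}$ of \eref{eqn:ftilde} maps every point into $M$, it has no fixed points strictly outside $M$, so the total index --- including boundary fixed points, whose index is by definition computed from $\tilde{f}$ --- equals the winding number of ${\bf v}_{\tilde{f}}$ on a contour $\Gamma$ lying strictly outside $M$, by the additivity property of the Kronecker integral \eref{eqn:kronecker}; (ii) extending each intermediate map $H(\cdot,t)$ in the same way forces $\tilde{H}({\bf x},t)\in M$ while ${\bf x}\in\Gamma$ lies outside $M$, so ${\bf v}_t$ never vanishes on $\Gamma$, and the integer-valued winding number is continuous, hence constant, in $t$. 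You also correctly isolated the real subtlety --- fixed points sitting on or migrating across $\partial M$ during the homotopy --- and placing $\Gamma$ outside $M$ is precisely what makes the argument immune to it; a naive homotopy of ${\bf v}_f/|{\bf v}_f|$ on $\partial M$ itself would indeed fail. The only part left informal is the multiply-connected case: when $\partial M$ has several components you need a contour just outside each one, together with the sign bookkeeping (index sum equals the winding number on the outer contour minus the winding numbers on the contours around the holes, all taken counterclockwise), each term being separately constant in $t$ by the same non-vanishing argument; with the collar-retraction extension you describe, this is routine. For the disc $D_0$, which is the only case the paper uses, your proof is complete as written.
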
 

In our case, $M$ is the disc $D_0$, and the following simple argument shows that the value of the index sum for any $f$ is unity \cite{brown1994}. Since the disc is contractible, all self-maps are homotopic. Thus any map $f$ is homotopic to the ``constant'' map $c:D\rightarrow D$ where $c({\bf x}_0)=0$ for all ${\bf x}_0\in D$. This map has a single fixed point at ${\bf x}_0=0$ with index $1$. Hence, by Theorem \ref{thm:hopf}, the map $f$ must also have index sum $1$, provided that it has a finite set of fixed points. For our further analysis it will be useful to separate the sum of interior fixed points from those on the boundary. We define
\begin{equation}
T_{\textrm{int}} = \sum_{{\bf p}\,\in\,\textrm{Fix}(f)\cap \textrm{int}(D_0)} \textrm{ind}_{\bf p} f, \qquad
T_{{\partial}} = \sum_{{\bf p}\,\in\,\textrm{Fix}(f)\cap\partial D_0} \textrm{ind}_{\bf p} f,
\end{equation}
giving us the following result.
\begin{cor}
If $f$ is a continuous self-map of the disc $D_0$ with $\textrm{Fix}(f)$ finite, then
\begin{equation}
T_{\textrm{int}} + T_{\partial} = 1.
\label{eqn:indrel}
\end{equation}
\label{cor:sum}
\end{cor}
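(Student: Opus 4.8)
The plan is to observe that, by the very definitions of $T_{\textrm{int}}$ and $T_\partial$, their sum equals $\sum_{{\bf p}\in\textrm{Fix}(f)}\textrm{ind}_{\bf p}f$, so the Corollary is just the assertion that this total index sum is unity. I would prove it by working with the extended map $\tilde{f}$ of \eref{eqn:ftilde} rather than $f$ itself, since this is precisely what allows the boundary fixed points (whose indices are \emph{defined} through $\tilde{f}$) to be counted on the same footing as the interior ones.

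The key steps are as follows. First I would verify that $\tilde{f}$ is a genuine self-map of any enlarged disc $D_\rho=\{|{\bf x}_0|\le\rho\}$ with $\rho>R$: for $|{\bf x}_0|\le R$ we have $\tilde{f}({\bf x}_0)=f({\bf x}_0)\in\overline{D_R}$, while for $|{\bf x}_0|>R$ we have $\tilde{f}({\bf x}_0)=f(R{\bf x}_0/|{\bf x}_0|)\in\partial D_1$, again of modulus $R$; hence $\tilde{f}(D_\rho)\subseteq\overline{D_R}\subset D_\rho$. The same computation shows $\tilde{f}$ has no fixed point with $|{\bf x}_0|>R$ (there $|\tilde{f}({\bf x}_0)|=R<|{\bf x}_0|$), so $\textrm{Fix}(\tilde{f})=\textrm{Fix}(f)$, all lying in the closed disc. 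Next, since $D_\rho$ is contractible, $\tilde{f}$ is homotopic to the constant map, and Theorem \ref{thm:hopf} gives that the index sum of $\tilde{f}$ over $D_\rho$ equals that of the constant map, namely $1$. Finally, because the index is additive, evaluating the Kronecker integral \eref{eqn:kronecker} around $\partial D_\rho$ returns the algebraic sum of the indices of all enclosed fixed points, which is $T_{\textrm{int}}+T_\partial$; equating the two expressions yields \eref{eqn:indrel}.

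Alternatively, and perhaps more transparently, the total index can be computed directly: on $\partial D_\rho$ we have $|\tilde{f}({\bf x}_0)|\le R<\rho=|{\bf x}_0|$, so the straight-line homotopy ${\bf H}({\bf x}_0,t)=t\,\tilde{f}({\bf x}_0)-{\bf x}_0$ never vanishes and deforms ${\bf v}_{\tilde{f}}$ into the map ${\bf x}_0\mapsto-{\bf x}_0$, whose winding number about the origin is $+1$. I expect the only real subtlety to lie in the bookkeeping of the boundary fixed points: the argument must confirm that passing to $\tilde{f}$ neither creates spurious fixed points outside $r=R$ nor alters the indices of the interior ones (it does not, since $\tilde{f}=f$ there). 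Once $\tilde{f}$ is recognised as a fixed-point-preserving self-map of $D_\rho$, the result is an immediate application of Theorem \ref{thm:hopf}.
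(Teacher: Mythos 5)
Your proof is correct. Its primary line---contractibility of the disc, homotopy to the constant map whose single fixed point has index $1$, then Theorem \ref{thm:hopf}---is exactly the paper's argument; the only difference in execution is that the paper applies Hopf's theorem to $f$ on $D_0$ itself, taking for granted (via the Brown--Greene formulation it cites) that boundary fixed points, whose indices are \emph{defined} through $\tilde f$, are handled correctly. Your enlarged-disc device makes that bookkeeping explicit and clean: on $D_\rho$ with $\rho>R$ every fixed point of $\tilde f$ is an interior point, its $\tilde f$-index coincides with the paper's definition of the interior or boundary index, and no spurious fixed points arise because $|\tilde f({\bf x}_0)|\le R<|{\bf x}_0|$ outside the original disc. (One small correction: under the Corollary's hypothesis $f$ is only a continuous self-map, so it need not send $\partial D_0$ to itself, and you may assert only $|\tilde f({\bf x}_0)|\le R$ rather than $=R$; since that inequality is all your argument uses, nothing breaks.) Your alternative argument is a genuinely different route that the paper does not take: the straight-line homotopy $t\,\tilde f({\bf x}_0)-{\bf x}_0$, nonvanishing on $\partial D_\rho$, shows that the winding number of ${\bf v}_{\tilde f}$ around $\partial D_\rho$ equals that of $-{\bf x}_0$, namely $+1$, and additivity of the index (the Kronecker-integral property stated after \eref{eqn:kronecker}) then yields $T_{\textrm{int}}+T_\partial=1$ with no appeal to Hopf's theorem at all. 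That version is self-contained degree theory, which is a real gain in transparency; the paper's proof is shorter because it delegates the topological content to Theorem \ref{thm:hopf}.
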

In the absence of boundary fixed points, the interior sum $T_{\textrm{int}}$ is known as the \emph{Lefschetz number}, or \emph{topological degree} of the mapping $f$. For the hyperbolic field of \fref{fig:hyp}, it is readily determined that $T_{\textrm{int}}=3$ and $T_{\partial}=-2$, verifying Corollary \ref{cor:sum} for this particular example.

The relation \eref{eqn:indrel} will be satisfied by the field line mapping of any continuous magnetic field ${\bf B}$, at all times during its dynamical evolution. In general, a magnetic field will be able to exchange contributions between the two terms $T_{\textrm{int}}$ and $T_{\partial}$, in such a way that their sum remains unity. However, if the physical situation is such that $T_{\partial}$ remains fixed, then Corollary \ref{cor:sum} implies that $T_{\textrm{int}}$ must also be a conserved quantity in the dynamical evolution, irrespective of the details of the dynamics. This is not an unusual situation: for instance, in astrophysical plasmas---where the evolution is commonly ideal in most of the domain---it is often possible to choose a flux tube such that the boundary remains in the ideal region while the interesting (non-ideal) dynamics occur in the interior.

The braided magnetic field simulation is a case where this topological constraint is significant and determines the final state of the system \cite{yeates2010}. The sequence of colour maps in \fref{fig:braid_cmaps} shows how, although the number of fixed points decreases from 27 at $t=0$ to 2 at $t=290$ (through bifurcations which preserve the total index), the overall value $T_{\textrm{int}}=2$ is conserved. Note that, in this simulation, the evolution at the side boundary of the computational box remains ideal, so that the field line mapping is preserved there. \Fref{fig:braid_cmaps} shows only part of the domain between $x=\pm3, y=\pm3$, but this region contains all of the interior fixed points and all of the non-ideal dynamics. The total index $T_{\textrm{int}}$ may be determined either by calculating the Kronecker integral \eref{eqn:kronecker} around the boundary of this square region \cite{polymilis2003}, or simply by inspecting the sequence of colours around this boundary, which suffice to determine the winding number of ${\bf v}_{\tilde{f}}$. The latter is clearly seen to remain the same throughout the simulation.

\subsection{Resistive diffusion: an apparent paradox}

For a fixed normal flux distribution on the boundary of the domain, it is well known that under resistive diffusion in a static medium, where
\begin{equation}
\frac{\partial {\bf B}}{\partial t} = \eta\nabla^2 {\bf B},
\end{equation}
every initial field ${\bf B}$ must decay asymptotically to a unique minimum-energy potential field (satisfying $\nabla\times{\bf B}=0$). But, for a given normal flux distribution on the boundary, one can have fields with a wide variety of $T_{\rm int}$. How can $T_{\rm int}$ be conserved if it differs from $T_{\rm int}$ of the minimum-energy field? 

The resolution of this paradox is suggested by the braiding simulation. Here the dynamical relaxation leads to a final state that preserves $T_{\rm int}=2$, even though the minimum-energy field is ${\bf B}_{\rm pot}={\bf e}_z$, a uniform vertical field where the index is undefined (since all field lines are periodic). If the simulation is continued further, the horizontal magnetic field continues to decay, but only on the much slower global resistive timescale $\tau = L^2/\eta$, where $L$ is the global domain size. This decay suggests that the vertical field ${\bf B}_{\rm pot}$ will be reached, but only asymptotically as $t\rightarrow\infty$. This infinite limit is usually not of physical interest; the question is how the system evolves on the much shorter dynamical timescale. At all \emph{finite} times, while any horizontal magnetic field remains, we expect $T_{\rm int}$ to be preserved (providing that $T_\partial$ is preserved). Further investigation of this point is merited in more general configurations.

\subsection{Generic fixed points}

What are the possible values of $\textrm{ind}_{{\bf x}_0}f$ for a fixed point? Non-degenerate interior fixed points correspond to non-degenerate critical points of ${\bf v}_f$, so have index either $+1$ or $-1$. Non-degenerate means that $\det(Df - Id)\neq 0$. A degenerate fixed point can always be decomposed by a small perturbation of the mapping into several non-degenerate fixed points, such that their total index equals that of the degenerate fixed point. Such degenerate fixed points are thus topologically unstable. In our physical system, we assume that all interior fixed points will be non-degenerate, or ``generic''.

\begin{figure}
\begin{center}
\includegraphics[width=0.4\textwidth]{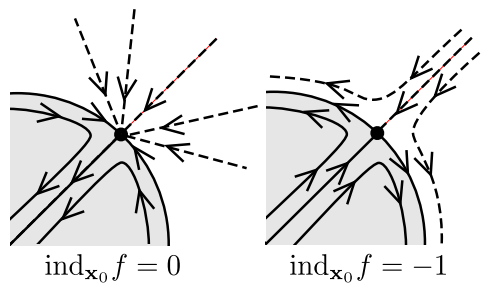}
\caption{The local vector field ${\bf v}_{\tilde{f}}$ for the two generic types of boundary fixed point, in the case that $|Df|=1$ at the fixed point. Dashed lines indicate the extension outside $D_0$, which is determined completely by the mapping on $\partial D_0$.} 
\label{fig:boundaryfp}
\end{center}
\end{figure}

The boundary condition \eref{eqn:brR} restricts the possible types of fixed point that we can have on the boundary $\partial D_0$.
\begin{lem}
Let ${\bf x}_0$ be an isolated, non-degenerate fixed point of $f$ located on the boundary $\partial D_0$. If $\left|Df \right|({\bf x}_0)=1$ then $\textrm{ind}_{{\bf x}_0}f$ is either $0$ or $-1$.
\label{lem:bound}
\end{lem}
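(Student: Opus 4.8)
The plan is to reduce the index to a local winding-number computation in coordinates adapted to the boundary, in which the only non-standard ingredient is the radial extension $\tilde f$ of \eref{eqn:ftilde}. First I would place ${\bf x}_0$ at the origin and introduce coordinates $(\xi,\eta)$ with $\xi$ tangent to $\partial D_0$ and $\eta=r-R$ the outward normal, so that the disc interior is $\{\eta<0\}$ and the region where the extension acts is $\{\eta>0\}$. Because \eref{eqn:brR} makes $\partial D_0$ an invariant circle, $f$ maps $\{\eta=0\}$ into itself; differentiating this relation at the fixed point kills the lower-left entry of the Jacobian, so
\[
Df({\bf x}_0)=\left(\begin{array}{cc} a & b \\ 0 & d \end{array}\right),\qquad |Df|=ad .
\]
Since $f$ is orientation-preserving it restricts to an orientation-preserving map of $\partial D_0$, giving $a>0$, and the hypothesis $|Df|=1$ then forces $d=1/a>0$. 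Non-degeneracy, $\det\big(Df-\mathrm{Id}\big)=(a-1)(d-1)\neq0$, excludes $a=1$ and $d=1$; together with $ad=1$ this leaves exactly the two cases $a>1>d$ and $0<a<1<d$.

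Next I would write ${\bf v}_{\tilde f}=\tilde f-\mathrm{Id}$ on a small circle $\Gamma$ about the origin, using the linearisation of $f$ on $\{\eta\le0\}$ and the definition \eref{eqn:ftilde} on $\{\eta>0\}$. The key observation is that the extension collapses the normal direction: for $\eta>0$ one has $\tilde f(\xi,\eta)=f(\xi,0)$, whose image lies exactly on $\partial D_0$, so the normal component of ${\bf v}_{\tilde f}$ there is exactly $-\eta$, directed back into the disc whatever the value of $d$. Collecting the two branches,
\[
v^\eta_{\tilde f}=\cases{(d-1)\,\eta & \textrm{for $\eta\le0$},\\ -\eta & \textrm{for $\eta>0$},}
\]
while the tangential component equals $(a-1)\xi$ up to a term $b\eta$ present only on the interior branch; I would first remove this term by the homotopy $b\mapsto tb$, $t\in[0,1]$, checking that the only zero of ${\bf v}_{\tilde f}$ on and inside $\Gamma$ remains the origin.

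The index then follows by inspecting $v^\eta_{\tilde f}$. In the case $0<a<1<d$ both branches give $v^\eta_{\tilde f}\le0$, so as $\Gamma$ is traversed the image vector never leaves the closed lower half-plane and cannot wind around the origin; hence $\textrm{ind}_{{\bf x}_0}f=0$. In the case $a>1>d$ the normal component is $(d-1)\eta>0$ on the interior and $-\eta<0$ on the exterior, so ${\bf v}_{\tilde f}$ everywhere has positive $\xi$-coefficient and negative $\eta$-coefficient; this field is homotopic, without creating zeros on $\Gamma$, to the standard saddle $(\xi,-\eta)$, whose winding number is $-1$, so $\textrm{ind}_{{\bf x}_0}f=-1$. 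Thus the index is $0$ or $-1$, as claimed.

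The point I expect to be easiest to get wrong is that the index cannot be read from $Df$ alone. For a genuine interior fixed point one would have $\det\big(Df-\mathrm{Id}\big)=(a-1)(d-1)<0$ in both cases, giving index $-1$; it is precisely the radial extension---forcing $v^\eta_{\tilde f}=-\eta$ on the outside---that opens up the index-$0$ possibility, so the main work is in honestly recomputing the winding number with $\tilde f$ rather than $f$. Verifying that ${\bf v}_{\tilde f}$ is continuous across $\eta=0$ and that $\Gamma$ encloses no other fixed point are then routine.
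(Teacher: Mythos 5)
Your proposal is correct and follows essentially the same route as the paper's proof in \ref{app:proof}: adapted boundary coordinates, the triangular Jacobian forced by invariance of $\partial D_0$, positivity of the eigenvalues plus $|Df|=1$ and non-degeneracy to conclude the eigenvalues straddle $1$, and then the two semi-saddle cases with the radial extension deciding between index $0$ and $-1$. The only difference is one of explicitness: where the paper reads the two index values off the figure of the extended field ${\bf v}_{\tilde f}$, you compute the winding numbers directly (the half-plane confinement argument for index $0$, and the homotopy to the standard saddle $(\xi,-\eta)$ for index $-1$), which makes precise exactly the step the paper leaves pictorial.
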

The proof is given in \ref{app:proof}, while the two possible cases are shown in \fref{fig:boundaryfp}. Observe that our periodic condition \eref{eqn:bzs0}, along with \eref{eqn:df}, ensures $\left|Df \right|({\bf x}_0)=1$ at every fixed point ${\bf x}_0$. Thus Lemma \ref{lem:bound} applies to every fixed point on $\partial D_0$, implying that $T_{\partial} \leq 0$ for any $f$. So, by Corollary \ref{cor:sum}, a periodic flux tube must have
\begin{equation}
T_{\textrm{int}} \geq 1.
\end{equation}

\section{Periodic points} \label{sec:periodic}

The success of Corollary \ref{cor:sum} in explaining the final state of the braided magnetic field \cite{yeates2010} leads us to ask whether additional topological constraints arise from higher periodic points. This requires condition \eref{eqn:bzs0} that the flux tube is periodic.

Writing $f^m$ for $f\circ f\circ \ldots \circ f$ iterated $m$ times, a \emph{periodic point} ${\bf x}_0\in D_0$ of \emph{(minimum) period} $m$ satisfies
\numparts
\begin{eqnarray}
f^m({\bf x}_0) = {\bf x}_0,\\
f^n({\bf x}_0) \neq {\bf x}_0 \,\textrm{ for $n<m$}.
\end{eqnarray}
\endnumparts
A periodic point is thus a fixed point of $f^m$. In particular, the fixed points of $f$ are periodic points with period 1. \Fref{fig:period2} is a sketch of the second iteration, $f^2$, for a hypothetical flux tube. Notice that the fixed points of $f^2$ include periodic points of periods 1 and 2. In general, the fixed points of a given iteration $f^q$ could include periodic points of any period $m\geq 1$ that divides $q$ (written $m|q$). Notice also that the periodic points of period $m$ fall into groups of size $m$ corresponding to the same ``periodic orbit''. In \fref{fig:period2}, the blue periodic orbit comprises two period 2 points, ${\bf x}_2^a$ and ${\bf x}_2^b$.

\begin{figure}
\begin{center}
\includegraphics[width=0.2\textwidth]{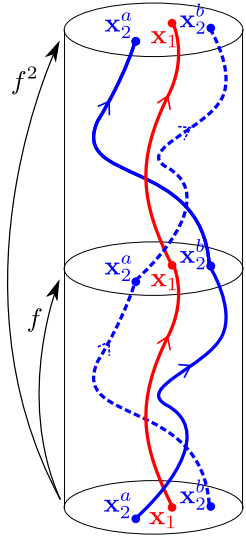}
\caption{Field lines corresponding to some fixed points of $f^2$. The point ${\bf x}_1$ (in red) is a period 1 point (i.e. a fixed point of $f$). The points ${\bf x}_2^a$, ${\bf x}_2^b$ (in blue) are period 2 points, both belonging to the same periodic orbit.}
\label{fig:period2}
\end{center}
\end{figure}

For a given iteration $q$ we can consider (as before) the sum over all interior fixed points,
\begin{equation}
T_{\textrm{int}}^q = \sum_{{\bf p}\,\in\,\textrm{Fix}(f^q)\cap\textrm{int}(D_0)} \textrm{ind}_{\bf p} f^q,
\end{equation}
and the analogous sum $T_{\partial}^q$ over the boundary fixed points.

\begin{cor}
If $f$ is a continuous self-map of the unit disc $D_0$, with $\textrm{Fix}(f^q)$ finite for every $q\in\mathbb{N}$, then
\begin{equation}
T_{\textrm{int}}^q + T_{\partial}^q = 1 \textrm{ for all $q\in\mathbb{N}$}.
\end{equation}
\label{cor:family}
\end{cor}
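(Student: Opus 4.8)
The plan is to recognise that Corollary \ref{cor:family} is nothing more than Corollary \ref{cor:sum} applied to the iterated map $f^q$ in place of $f$. The first step is to verify that $f^q = f\circ f \circ \cdots \circ f$ inherits the hypotheses of Corollary \ref{cor:sum}. Since $f$ is a continuous self-map of the disc $D_0$ (under the identification of $D_0$ with $D_1$ that the periodicity condition \eref{eqn:bzs0} guarantees), and a composition of continuous self-maps is again a continuous self-map, $f^q:D_0\rightarrow D_0$ is continuous. Moreover, $f$ being a homeomorphism maps $\partial D_0$ to itself, so $f^q$ does too; the fixed points of $f^q$ therefore split cleanly into an interior set and a boundary set, matching the definitions of $T_{\textrm{int}}^q$ and $T_{\partial}^q$. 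The requirement that $\textrm{Fix}(f^q)$ be finite is assumed directly in the statement.

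Next I would confirm that the boundary index entering $T_{\partial}^q$ is well defined. This is handled by applying the extension construction \eref{eqn:ftilde} to $f^q$ rather than to $f$: the same formula produces a continuous extension $\widetilde{f^q}$ that introduces no new fixed points, so $\textrm{ind}_{\bf p}f^q$ is defined for boundary points exactly as in \sref{sec:fixed}. With these observations in hand, Corollary \ref{cor:sum} applies verbatim to the self-map $f^q$ (its proof, via homotopy of any disc self-map to the constant map and Theorem \ref{thm:hopf}, uses only continuity and finiteness of the fixed-point set), yielding $T_{\textrm{int}}^q + T_{\partial}^q = 1$. As $q\in\mathbb{N}$ is arbitrary, the identity holds for every $q$, giving the claimed countably infinite family of relations.

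I expect no genuine obstacle: the result is a direct corollary, and the only care required is the routine bookkeeping that the hypotheses and constructions underlying Corollary \ref{cor:sum} transfer to the iterate $f^q$. The single point worth stating explicitly is \emph{why} $f^q$ is again a self-map of the same disc, since this relies essentially on the periodicity condition \eref{eqn:bzs0}; without it, $f$ would map $D_0$ to a distinct disc $D_1$ and the iteration $f^q$ would not even be defined.
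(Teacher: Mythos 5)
Your proposal is correct and matches the paper's (implicit) argument exactly: the paper offers no separate proof of Corollary \ref{cor:family}, treating it as the direct application of Corollary \ref{cor:sum} to the iterate $f^q$, just as you do. Your explicit checks---that $f^q$ is a continuous self-map preserving $\partial D_0$, that the extension \eref{eqn:ftilde} applies to $f^q$ without introducing fixed points, and that periodicity \eref{eqn:bzs0} is what allows iteration at all---are precisely the routine verifications the paper leaves to the reader.
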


If all $T_{\partial}^q$ are conserved during the dynamical evolution, then the $T_{\textrm{int}}^q$ will be a countably infinite family of topological constraints on the magnetic field evolution. The question of interest physically is whether these additional constraints are important in practice, or, in other words, whether they are independent of one another. The following result shows that, in a large class of magnetic fields, they are not.

\begin{thm}
Let $f$ be an orientation-preserving homeomorphism of the unit disc $D_0$, with $\textrm{Fix}(f^q)$ finite for every $q\in\mathbb{N}$, and with $\textrm{ind}_{{\bf x}_0}f^q\in\{-1,0\}$ for every periodic point ${\bf x}_0$ on $\partial D_0$. Then
\begin{enumerate}
\item If $T_{\textrm{int}}^1 > 1$ then $T_{\textrm{int}}^q = T_{\textrm{int}}^1$ for all $q\in\mathbb{N}$.
\item If $T_{\textrm{int}}^1 = 1$ then $T_{\textrm{int}}^q = 1\,\forall q\in\mathbb{N}$ if and only if the rotation number of $f|_{\partial D_0}$ is irrational.
\end{enumerate}
\label{prop:main}
\end{thm}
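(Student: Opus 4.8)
The plan is to reduce the whole statement to the behaviour of the boundary map and its rotation number. By Corollary \ref{cor:family} we have $T_{\textrm{int}}^q = 1 - T_\partial^q$ for every $q$, so both parts are equivalent to assertions about the boundary sums $T_\partial^q$: part (i) says that $T_\partial^1<0$ forces $T_\partial^q=T_\partial^1$ for all $q$, while part (ii) says that, when $T_\partial^1=0$, one has $T_\partial^q=0$ for all $q$ if and only if the rotation number $\rho$ of $g:=f|_{\partial D_0}$ is irrational. Since $f$ is an orientation-preserving homeomorphism of the disc carrying $\partial D_0$ to itself, $g$ is an orientation-preserving circle homeomorphism and $\rho$ is well defined. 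I would then organise everything around the standard dichotomy for such maps: if $\rho$ is irrational then $g$ has no periodic points at all, so $\textrm{Fix}(f^q)\cap\partial D_0=\emptyset$ and $T_\partial^q=0$ for every $q$; whereas if $\rho=p/m$ in lowest terms then every periodic point of $g$ has exact period $m$, the set $\textrm{Fix}(f^q)\cap\partial D_0$ is nonempty precisely when $m\mid q$, and for those $q$ it coincides with the single period-$m$ orbit set, independently of $q$.

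The second ingredient is to control the index of a boundary periodic point and its behaviour under iteration. By \eref{eqn:bzs0} and \eref{eqn:df} we have $|Df^q|=1$ at every periodic point, so Lemma \ref{lem:bound} applies and each boundary index is $0$ or $-1$. I would argue that this value is governed entirely by the tangential displacement $\phi(s):=g^m(s)-s$ on the circle: a point has index $-1$ when $\phi$ passes from negative to positive (tangentially repelling) and index $0$ when it passes from positive to negative (tangentially attracting), the normal direction being slaved to the tangential one through $|Df^q|=1$. Because $g$ is orientation-preserving its tangential multiplier is positive, and iteration can only send a multiplier $>1$ to a larger value and a multiplier $<1$ to a smaller one; equivalently, the sign of $\phi$ on each side of a periodic point is preserved under further iteration. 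Hence for $m\mid q$ the index of each period-$m$ boundary point is the same for $f^q$ as for $f^m$, giving $T_\partial^q=T_\partial^m$ whenever $m\mid q$ and $T_\partial^q=0$ otherwise.

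The remaining key step is a sign count: whenever $g^m$ has rotation number zero and a nonempty finite fixed set, the periodic function $\phi$ must change sign, and its transversal zeros alternate in type around the circle in equal, nonzero numbers, so at least one is tangentially repelling and $T_\partial^m\le -1<0$. With this the two cases close quickly. For (i), $T_\partial^1<0$ forces $g$ to possess a fixed point, hence $\rho=0$ and $m=1$; the fixed-point set is then common to all iterates with indices unchanged, so $T_\partial^q=T_\partial^1$ and $T_{\textrm{int}}^q=T_{\textrm{int}}^1$. For (ii) with $T_\partial^1=0$: if $\rho$ were an integer ($m=1$) the sign count would already give $T_\partial^1<0$, a contradiction, so a rational $\rho$ must have $m\ge 2$; then $f$ has no boundary fixed point, consistent with $T_\partial^1=0$, but $f^m$ does, and the sign count gives $T_\partial^m<0$, so $T_{\textrm{int}}^m\neq 1$. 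This is the contrapositive of the ``only if'' direction, the ``if'' direction being the irrational case already noted. The main obstacle I anticipate is exactly this sign-count/index step: forcing a boundary point of index $-1$ in the rational case requires knowing that the index is determined by the tangential type and is preserved under iteration, which leans on the genericity (non-degeneracy) of periodic points assumed throughout the paper. Degenerate, one-sided (semistable) boundary points, whose index is not pinned down by the tangential data alone, are the delicate case that must be excluded or treated separately.
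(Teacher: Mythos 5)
Your proposal is correct in the setting the paper actually works in (smooth field line mappings with non-degenerate periodic points), and its skeleton matches the paper's: reduce everything to the boundary sums via Corollary \ref{cor:family}, invoke the dichotomy for orientation-preserving circle homeomorphisms (all periodic points share a single minimal period $m$; such points exist if and only if the rotation number is rational), and show that the index of a boundary periodic point cannot change under iteration. The genuine difference lies in that last step. The paper proves it as Lemma \ref{lemma} by induction on the Dold congruences \eref{eqn:dold}, using only continuity of $\tilde{f}$ and the hypothesis $\textrm{ind}\in\{-1,0\}$: the congruence pins $\textrm{ind}_{{\bf x}_0}f^q$ to $\textrm{ind}_{{\bf x}_0}f$ modulo $q$, and the two-element index set does the rest. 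You instead argue dynamically: the tangential multiplier $\mu_1>0$ satisfies $\mu_1^q>1$ exactly when $\mu_1>1$, and the index is slaved to the tangential type through $|Df^q|=1$ and Lemma \ref{lem:bound}. The Dold route buys generality---it needs no derivatives, no non-degeneracy, and applies verbatim to the homeomorphisms in the theorem's statement---whereas your multiplier argument requires the differentiable, generic, flux-preserving setting coming from \eref{eqn:bzs0} and \eref{eqn:df}, which is narrower than the stated hypotheses.

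In exchange, your proposal supplies a step the paper asserts without proof: that in the rational case the first nonzero boundary sum is strictly negative ($T_m<0$ in the paper's notation), and, relatedly, that $T_\partial^1=0$ forces the boundary to have \emph{no} fixed points rather than merely a collection of index-$0$ ones. Your alternation argument---transversal zeros of the boundary displacement alternate, so tangentially repelling and attracting points occur in equal, nonzero numbers, whence $T_\partial^m\le-1$---is exactly what justifies both claims, and it is valid under genericity. Moreover, the caveat you raise at the end is not just a loose end of your own proof: a tangentially semi-stable boundary fixed point has index $0$, satisfies the stated hypothesis $\textrm{ind}\in\{-1,0\}$, and breaks the ``only if'' direction of part (ii). For instance, the disc homeomorphism $(r,\theta)\mapsto\bigl(r/(2-r),\,\theta+(1-r)+r(1-\cos\theta)\bigr)$ has, for every $q$, exactly two fixed points of $f^q$: the origin (index $1$) and the boundary point $\theta=0$ (semi-stable, index $0$), so $T_{\textrm{int}}^q=1$ for all $q$ even though the boundary rotation number is $0$. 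So non-degeneracy (or some equivalent exclusion of one-sided boundary points) is a standing assumption needed by your proof, by the paper's proof, and indeed by the theorem itself as literally stated; once it is in place, your argument closes completely.
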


According to Theorem \ref{prop:main}, the index sums of higher iterations can differ from that of the initial mapping only if the initial sum is $1$, and then only if the restriction of $f$ to the boundary $\partial D_0$ has a particular form (namely a rational rotation number: this will be defined below). Thus for any other initial magnetic field, it will hold at all times throughout the ensuing dynamical evolution that $T_{\textrm{int}}^q = T_{\textrm{int}}^1$ for all $q\in\mathbb{N}$. This means that the higher constraints of Corollary \ref{cor:family} will be automatically maintained for any possible dynamics and thus pose no additional constraint on the evolution beyond that imposed by $T_{\textrm{int}}^1$.

\subsection{Proof of Theorem \ref{prop:main}}

The strategy is to use the fact that $T_{\textrm{int}}^q=1-T_{\partial}^q$ and focus on the boundary mapping $f|_{\partial D_0}$. Firstly, we consider the sequence of indices of an individual fixed point under iteration of $f$. 

A fixed point ${\bf x}_0$ of $f$ recurs as a fixed point of each iteration $f^q$ for $q\in\mathbb{N}$, and the index of this fixed point under each $f^q$ generates a sequence $\big\{\textrm{ind}_{{\bf x}_0}f^q\big\}$. The possible index sequences for different classes of mapping have been quite well studied (see, e.g., \cite{graff2005}). If $f$ is continuous and ${\bf x}_0$ is an isolated fixed point for each $q$, then Dold \cite{dold1983} proved that the sequence of indices satisfies the sequence of congruences
\begin{equation}
\sum_{n|q}\mu(n)\,\textrm{ind}_{{\bf x}_0}f^{q/n} \equiv 0 \quad\textrm{mod } q
\label{eqn:dold}
\end{equation}
for each $q\geq 1$ (called \emph{Dold relations}). Here $\mu(n)$ is the M\"{o}bius function from number theory, defined by
\begin{equation}
\fl\mu(n)=
\cases{0 & \textrm{if $p^2|n$ for some prime $p$},\\
1 & \textrm{if $n$ is square-free with an even number of prime factors},\\
-1 & \textrm{if $n$ is square-free with an odd number of prime factors}.}
\end{equation}

Since the extended mapping $\tilde{f}$ is continuous, we may use the Dold relations \eref{eqn:dold} to obtain the following Lemma for fixed points on $\partial D_0$. Note that it follows from \eref{eqn:ftilde} that the fixed points of $\tilde{f^q}$ are the same as those of $f^q$ for all $q$.

\begin{lem}
Let $f$ be a field line mapping whose boundary fixed points on each iteration have index either $0$ or $-1$. Then for each such fixed point ${\bf x}_0$,
\[
\textrm{ind}_{{\bf x}_0}f^q = \textrm{ind}_{{\bf x}_0}f \textrm{ for all $q\in\mathbb{N}$}.
\]
\label{lemma}
\end{lem}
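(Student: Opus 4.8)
The plan is to exploit the Dold relations \eref{eqn:dold} together with the severe restriction that each index lies in $\{0,-1\}$. Writing $i_q := \textrm{ind}_{{\bf x}_0}f^q$, the hypothesis gives $i_q\in\{0,-1\}$ for every $q\in\mathbb{N}$ (this is exactly what Lemma \ref{lem:bound} guarantees when applied to each iterate $f^q$, since periodicity forces $|Df^q|({\bf x}_0)=1$). I would then prove $i_q = i_1$ for all $q$ by strong induction on $q$, the base case $q=1$ being trivial.

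For the inductive step, fix $q\geq 2$ and assume $i_d = i_1$ for every proper divisor $d$ of $q$. In the Dold relation I would isolate the term $n=1$, which contributes $\mu(1)\,i_q = i_q$, and observe that for each remaining divisor $n>1$ the argument $q/n$ is a \emph{proper} divisor of $q$, so that $i_{q/n}=i_1$ by the inductive hypothesis. The remaining sum therefore collapses to $i_1\sum_{n|q,\,n>1}\mu(n)$, and using the standard M\"{o}bius identity $\sum_{n|q}\mu(n)=0$ (valid for $q>1$) this equals $-i_1$. Hence \eref{eqn:dold} reduces to the single congruence $i_q - i_1 \equiv 0$ $(\textrm{mod } q)$.

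The crux is then to upgrade this congruence to an equality, and here the index restriction does all the work. Because both $i_q$ and $i_1$ lie in $\{0,-1\}$, their difference satisfies $|i_q-i_1|\leq 1$; for $q\geq 2$ the only integer of absolute value at most $1$ that is divisible by $q$ is $0$, so the congruence forces $i_q = i_1$, completing the induction. I expect the only real subtlety to be the bookkeeping in the inductive step---correctly checking that every divisor $n>1$ pairs with a strictly smaller argument $q/n<q$, so that the inductive hypothesis genuinely applies, and invoking the M\"{o}bius identity in the correct ($q>1$) form. Once the congruence $i_q\equiv i_1$ $(\textrm{mod } q)$ is in hand, the boundedness of the admissible index values makes the remaining deduction immediate, which is precisely why the $\{0,-1\}$ hypothesis on boundary indices is the essential ingredient.
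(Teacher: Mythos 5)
Your proposal is correct and follows essentially the same route as the paper's own proof: strong induction on $q$, isolating the $n=1$ term of the Dold relation \eref{eqn:dold}, collapsing the remaining sum via the M\"{o}bius identity $\sum_{n|q}\mu(n)=0$ for $q>1$ to obtain $\textrm{ind}_{{\bf x}_0}f^q \equiv \textrm{ind}_{{\bf x}_0}f \pmod q$, and then using the $\{0,-1\}$ restriction to upgrade the congruence to equality. The only (harmless) differences are that you phrase the inductive hypothesis over proper divisors of $q$ rather than all $m<q$, and you spell out the final divisibility argument that the paper leaves implicit.
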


\begin{proof}
Induction on $q$ (cf. a similar argument used by \cite{graff2003}). Clearly the result holds for $q=1$. Suppose that $\textrm{ind}_{{\bf x}_0}f^m = \textrm{ind}_{{\bf x}_0}f$ for all $m<q$. Now, separating the first term in the $q$th Dold's relation \eref{eqn:dold} gives
\[
\mu(1)\,\textrm{ind}_{{\bf x}_0}f^q + \sum_{\substack{n|q\cr n>1}}\mu(n)\,\textrm{ind}_{{\bf x}_0}f^{q/n} \equiv 0 \quad\textrm{mod } q.
\]
Now $\mu(1)=1$, and since $q/n < q$ the inductive assumption gives
\[
\,\textrm{ind}_{{\bf x}_0}f^q + \textrm{ind}_{{\bf x}_0}f\sum_{\substack{n|q\cr n>1}}\mu(n) \equiv 0 \quad\textrm{mod } q.
\]
But
\[
\sum_{\substack{n|q\cr n>1}}\mu(n) = \sum_{n|q}\mu(n) - \mu(1) = -1,
\]
using a well-known property of the M\"{o}bius function. Thus we must have
\[
\textrm{ind}_{{\bf x}_0}f^q \equiv \textrm{ind}_{{\bf x}_0}f \quad\textrm{mod } q,
\]
and the result follows from the restriction that $\textrm{ind}_{{\bf x}_0}f^q\in\{-1,0\}$.
\end{proof}

Now consider the restriction $f|_{\partial D_0}$ of $f$ to the boundary. This is an orientation-preserving homeomorphism of the circle. (Orientation-preserving here means that the order of any three points on $\partial D_0$ cannot be permuted by $f$. Physically, it follows from the fact that $f$ maps $\partial D_0$ to itself, and the restriction that magnetic field lines cannot intersect when ${\bf B}\neq 0$.) But an orientation-preserving circle homeomorphism can have periodic points of at most a single (minimal) period \cite{demelo1993}. Thus there are two cases.

\emph{Case (i)}: $T_{\textrm{int}}^1 > 1$. In this case we must have $T_{\partial}^1\neq 0$, so that $f$ has at least one fixed point on the boundary. But then $f$ can have no boundary periodic points of any higher period: the fixed points of $f^q$ will be the same as those of $f$ for all $q\in\mathbb{N}$. Applying Lemma \ref{lemma} to each boundary fixed point gives $T_{\textrm{int}}^q=T_{\textrm{int}}^1$ for all $q$.

\emph{Case (ii)}: $T_{\textrm{int}}^1 = 1$. This implies that $f$ has no fixed points on $\partial D_0$. Therefore there could be periodic points of some unique period $m>1$. According to the theory of circle homeomorphisms, the existence of such points depends on the rotation number of the homeomorphism \cite{demelo1993}. The \emph{rotation number} of $f|_{\partial D_0}$ is defined as
\[
\rho = \lim_{k\rightarrow\infty}\frac{G^k(\phi_0)-\phi_0}{2\pi k},
\]
where $\phi_0\in\partial D_0$ and $G$ is a lift of $f|_{\partial D_0}$ to the real line. The limit exists and is independent of the point $\phi_0$. Since $f|_{\partial D_0}$ is an orientation-preserving circle homeomorphism, $\rho$ is rational if and only if $f|_{\partial D_0}$ has at least one periodic point. In that case, and if $\rho$ written in its lowest terms is $n/m$, then all periodic points must have period $m$. Otherwise, $\rho$ is irrational and $f|_{\partial D_0}$ has no periodic points. We now apply Lemma \ref{lemma}. In the irrational case, we have $T_{\partial}^q=0$ for all $q\in\mathbb{N}$. In the rational case, we have
\begin{equation}
T_\partial^q=
\cases{T_m &\textrm{ if $m|q$},\\
0 & \textrm{ otherwise},}
\end{equation}
where $T_m<0$ is a constant given by the index sum of the $m$th iteration (the first with fixed points). This completes the proof of Theorem \ref{prop:main}. \qed

\subsection{Examples} \label{sec:eg}

The following examples illustrate the two possible cases in Theorem \ref{prop:main}. Case (i) where $T_{\textrm{int}}^1>1$ is exemplified by the braided magnetic field simulation discussed earlier. \Fref{fig:pbraid} shows computed colour maps for $f$, $f^2$, and $f^3$ in both the initial and final states of this simulation. At both times, $T^1_{\textrm{int}}=2$, and we see that $T_{\textrm{int}}^2=T_{\textrm{int}}^3=2$, in accordance with Theorem \ref{prop:main}. Thus the higher $T_{\textrm{int}}^q$ are automatically preserved in this simulation because $T_{\textrm{int}}^1$ is preserved and the dynamics are ideal near to $\partial D_0$. Notice that, although the index sums are equal for each iteration, this does not preclude the appearance of more periodic points at each iteration. This is visible particularly in the final state, where there are $2$ fixed (period 1) points, $16$ period 2 points, and $36$ period 3 points. However, the additional periodic points appear in equal numbers with index $+1$ and $-1$.

\begin{figure}
\begin{center}
\includegraphics[width=0.85\textwidth]{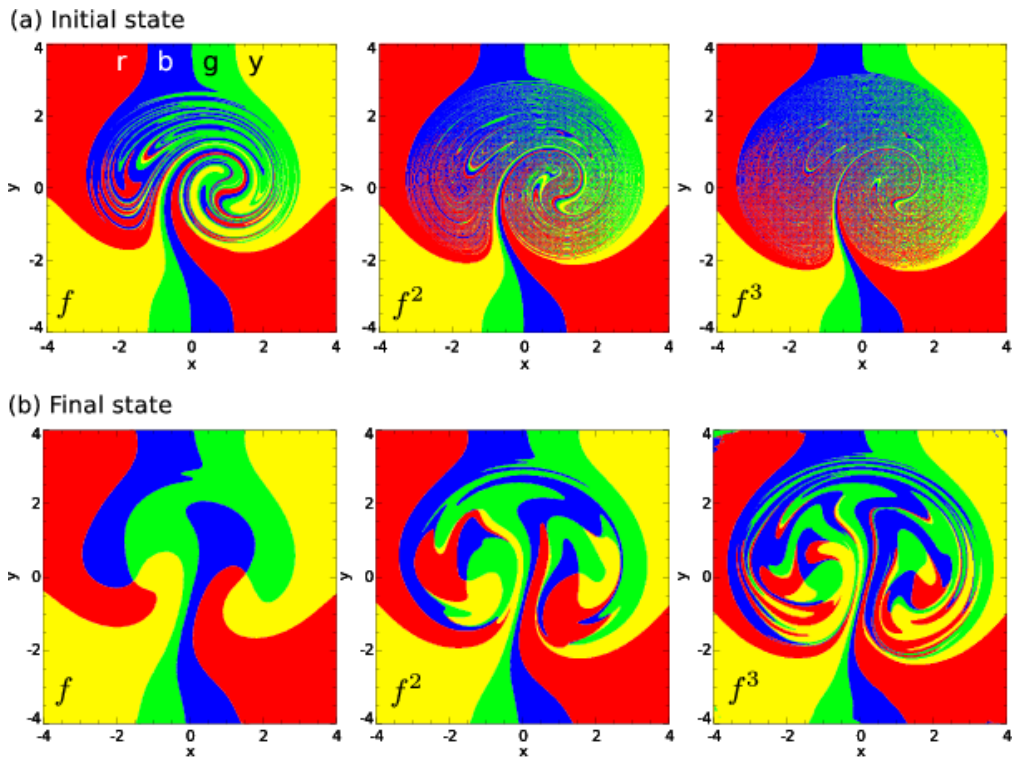}
\caption{Colour maps for the first three iterations of $f$ in (a) the initial state and (b) the final state of the original braided magnetic field simulation.}
\label{fig:pbraid}
\end{center}
\end{figure}

To illustrate case (ii) in Theorem \ref{prop:main}, we begin by constructing a magnetic field whose field line mapping has $T_{\rm int}^1=1$ but $T_{\rm int}^2\neq1$. This may be done by taking the uniform twist field ${\bf B}^{\rm twist} + {\bf e}_z$ and adding a small perturbation. We choose $\alpha=\pi$ so that every point of $D$ is a period 2 point of ${\bf B}^{\rm twist} + {\bf e}_z$. The perturbation takes the form of ${\bf B}^{\rm hyp}$, except that it is rotated as $z$ increases at the same rate as the uniform twist. This destroys the symmetry of the uniform twist map, leaving a finite set of period 2 points, which correspond to the fixed points of the original ${\bf B}^{\rm hyp}$ field. Our overall field is
\begin{equation}
{\bf B} = {\bf B}^{\rm twist} + \epsilon{\bf B}^{\rm hyp}(r,\phi-\pi z) + {\bf e}_z,
\label{eqn:combo}
\end{equation}
where $\epsilon=0.1$ and $\alpha=\pi$. The boundary mapping has rotation number $\rho=1/2$, so has periodic points of minimal period 2 only. This is verified for the first four iterations of $f$ by the colour maps shown in \fref{fig:comb}. In particular, we find $T_{\rm int}^2 = 3$, and so for this mapping
\begin{equation}
T_{\textrm{int}}^q =\cases{1 &\textrm{ for $q$ odd,}\\
3 &\textrm{ for $q$ even.}}
\end{equation}

\begin{figure}
\begin{center}
\includegraphics[width=\textwidth]{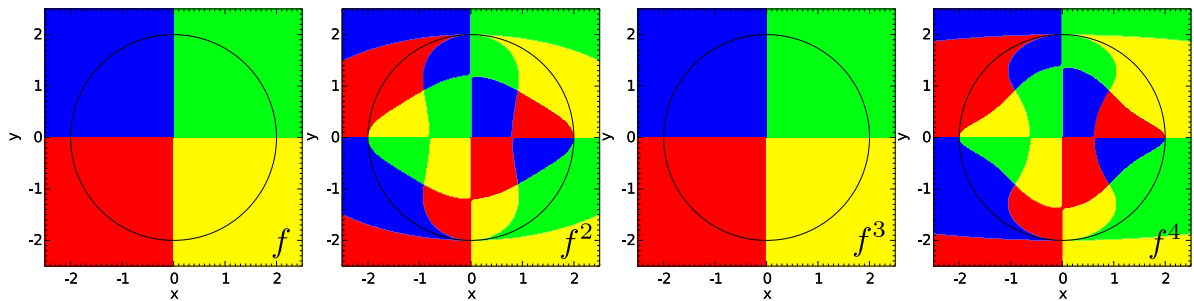}
\caption{Sequence of colour maps for iterations of the perturbed uniform twist magnetic field \eref{eqn:combo}, where all four periodic points on the boundary have period 2.}
\label{fig:comb}
\end{center}
\end{figure}

If a magnetic field with this boundary mapping were to evolve subject to ideal dynamics at the boundary, then our theory predicts two constraints on the evolution. We have verified this prediction through a new numerical simulation, solving the same resistive-MHD equations as the earlier braiding simulation \cite{wilmotsmith2010,pontin2010}. To initiate a dynamical relaxation, we add to the background field \eref{eqn:combo} six twist regions. These twist regions match those in the initial condition of the original braided magnetic field simulation (see \cite{wilmotsmith2009} for details). Their effect is to add significant complexity to the field line mapping, triggering the formation of thin current sheets in the relaxation process. (Since the initial magnetic field is smooth, it must remain smooth for any finite time $t$ during the resistive evolution, so the current sheets have a finite width rather than being true discontinuities.)

The top row of \fref{fig:combosim} shows $f$ and $f^2$ for the initial condition of the new simulation. For computational reasons, we use a rectangular domain $-6\leq x\leq 6$, $-6\leq y \leq 6$ and $-24 \leq z \leq 24$. The initial background field \eref{eqn:combo} has been scaled in $r$ and $z$ so that the cylindrical flux tube boundary is located at $R=5.22$ (shown by the circle) and the overall twist remains $\pi$. Because the superimposed twist regions are localised well within the flux tube, the field line mapping at $r=R$ follows that of the background field (a uniform twist of $\pi$ with a small perturbation). Hence the overall initial field for $r<R$ retains the property of \eref{eqn:combo} that $T_{\rm int}^1=0$ and $T_{\rm int}^2 = 3$. The bottom row of \fref{fig:combosim} shows that this topological property persists throughout the dynamical evolution, verifying that both $T_{\rm int}^1$ and $T_{\rm int}^2$ are conserved, acting as dynamical constraints. The simulation is imperfect in that the cylindrical flux tube boundary does not remain perfectly ideal, due to the finite numerical dissipation and to the fact that it does not coincide with the computational boundary. Thus the period 2 points on the flux tube boundary move slightly between the initial and final states. Nevertheless, it remains sufficiently close to ideal to illustrate the topological constraints.

\begin{figure}
\begin{center}
\includegraphics[width=0.65\textwidth]{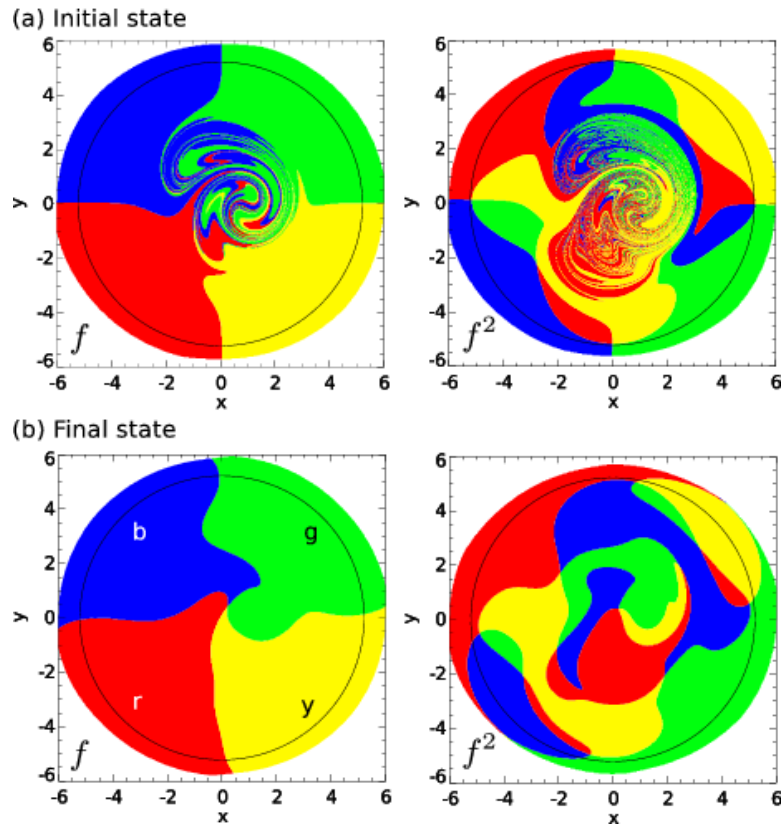}
\caption{Colour maps for the first two iterations of $f$ in (a) the initial state and (b) the final state of the new simulation described in \sref{sec:eg}. This is a magnetic field with two independent dynamical constraints. Outside the flux tube $R=5.22$ (shown by the black circle), white points indicate field lines which leave the side boundary of the computational box (where they are ``line-tied'').}
\label{fig:combosim}
\end{center}
\end{figure}

\ack
We thank G. Graff for useful suggestions on an initial draft, and two anonymous referees for helpful comments. The work was supported by STFC Grant ST/G002436/1. Numerical simulations used the Lare3d code \cite{arber2001} and the UKMHD consortium parallel cluster at the University of St Andrews funded by STFC and SFC (SRIF).

\appendix
\section{Proof of Lemma \ref{lem:bound}} \label{app:proof}

This relies on the boundary condition \eref{eqn:brR}, and on the property that $f$ maps the boundary $\partial D_0$ to itself. Without loss of generality we work in Cartesian coordinates $(x,y)$ where the $x$-axis corresponds to  $\partial D_0$ and the upper half of the $xy$-plane to the interior of $D_0$. The fixed point is at $x=y=0$. Let ${\bf V}_f$ be the local linearisation of ${\bf v}_f$ about the fixed point, so that ${\bf V}_f=A{\bf x}$, where the matrix $A$ is given by
\begin{equation}
A \equiv \left(
\begin{array}{cc}
a & b \\
c & d
\end{array}
\right) = Df(0,0) - Id.
\label{eqn:amat}
\end{equation}

We start by noting that, since $f$ maps the boundary $y=0$ to itself, $(1,0)^T$ must be an eigenvector of $A$. This implies that $a=\lambda_1$ and $c=0$, where $\lambda_1$ is the eigenvalue corresponding to the direction along the boundary, and that both eigenvalues are real. It follows from the relation $\Tr A=\lambda_1+\lambda_2$ that $d=\lambda_2$.

Next, we derive a constraint on $\lambda_2$ from the boundary condition \eref{eqn:brR}. This condition may be expressed as $f^y(x,y) > 0$ for $y>0$. Since $f^y(x,0)=0$, for small $y$ we may write
\begin{equation}
f^y \approx \left.\frac{\partial f^y}{\partial y}\right|_{(0,0)}y.
\end{equation}
Hence we must have $\partial f^y/\partial y|_{(0,0)}>0$. Forming the product of both sides of \eref{eqn:amat} with ${\bf x}=(x,y)^T$ then gives the following equality for the $y$-component:
\begin{equation}
\left(\left.\frac{\partial f_y}{\partial y }\right|_{(0,0)} -1\right)y = \lambda_2 y,
\end{equation}
from which we conclude that
\begin{equation}
\lambda_2 > -1.
\label{eqn:lam2}
\end{equation}

\begin{figure}
\begin{center}
\includegraphics[width=0.5\textwidth]{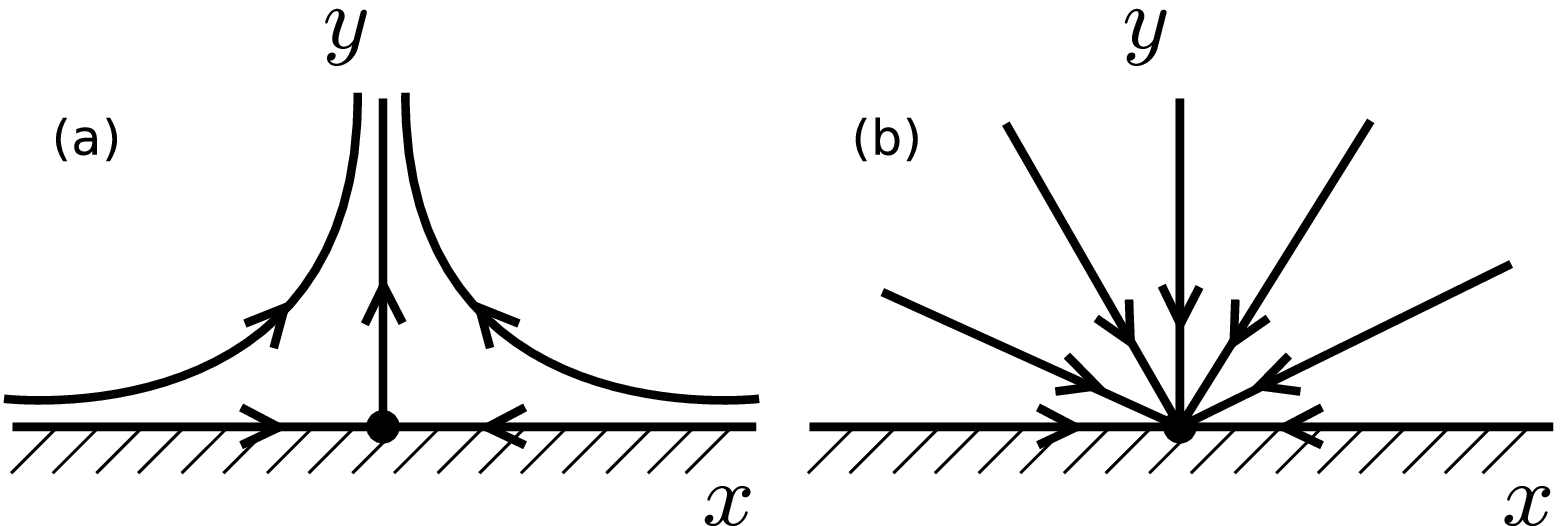}
\caption{Possible local behaviour of a generic boundary fixed point, where $\lambda_1<0$, showing (a) a ``semi-saddle'' point and (b) a ``semi-node''. The latter is possible only if $|Df|\neq 1$ at the fixed point.}
\label{fig:btypes}
\end{center}
\end{figure}

Finally, we use the assumption that $|Df(0,0)| = 1$. Writing $\mu_1$, $\mu_2$ for the eigenvalues of $Df(0,0)$, this gives $\mu_1\mu_2=1$. The eigenvalues of $Df(0,0)$ are related to those of $A$ by $\lambda_1=\mu_1-1$, $\lambda_2=\mu_2-1$, so that
\begin{equation}
\lambda_2 = \frac{-\lambda_1}{1+\lambda_1}.
\end{equation}
If $\lambda_1 > 0$, then clearly $\lambda_2<0$. If $\lambda_1<0$, then we can use \eref{eqn:lam2} to see that $1+\lambda_1>0$, and so $\lambda_2>0$. Thus $\lambda_1$, $\lambda_2$ are real and of opposite sign. This rules out the possibility of a ``semi-node'' on the boundary (\fref{fig:btypes}). There remain two types of generic boundary fixed point: both ``semi-saddles'' but differing in the sign of $\lambda_1$. The two types are shown in \fref{fig:boundaryfp}, along with the form of the extended vector field ${\bf v}_{\tilde{f}}$ in each case. It is evident from the figure that one type has index $0$ while the other has index $-1$. \qed

\section*{References}


\begin{thebibliography}{99}

\bibitem{goedbloed2004} Goedbloed J P and Poedts S 2004 {\it Principles of magnetohydrodynamics} (Cambridge: Cambridge University Press) p~140

\bibitem{titov2002} Titov V S, Hornig G and D\'{e}moulin P 2002 {\it J. Geophys. Res. A}  {\bf 107} 1164 

\bibitem{morrison2000} Morrison P J 2000 {\it Phys. Plasmas} {\bf 7} 2279

\bibitem{borgogno2008} Borgogno D, Grasso D, Pegoraro F and Schep T J 2008 {\it Phys. Plasmas}  {\bf 15} 102308

\bibitem{newcomb1958} Newcomb W A 1958 {\it Ann. Phys.}  {\bf 3} 347

\nonum Hornig G and Schindler K 1996 {\it Phys. Plasmas} {\bf 3} 781

\bibitem{moffatt1985} Moffatt H K 1985 {\it J. Fluid Mech.}  {\bf 159} 359

\nonum Arnold V I and Khesin B A 1998 {\it Topological methods in hydrodynamics} (New York: Springer) p~119

\bibitem{heyvaerts1984} Heyvaerts J and Priest E R 1984 {\it Astron. Astrophys.} {\bf 137} 63

\nonum Vekstein G E, Priest E R and Steele C D C 1993 {\it Astrophys. J.}  {\bf 417} 781

\nonum Browning P K, Gerrard C, Hood A W, Kevis R and van der Linden R A M 2008 {\it Astron. Astrophys.}  {\bf 485} 837

\bibitem{bhattacharjee1982} Bhattacharjee A and Dewar R L 1982 {\it Phys. Fluids}  {\bf 25} 887

\nonum Ruzmaikin A and Akhmetiev P 1994 {\it Phys. Plasmas}  {\bf 1} 331

\nonum Hornig G and Mayer C 2002 {\it \JPA}  {\bf 35} 3945

\nonum Del Sordo F, Candelaresi S and Brandenburg A 2010 {\it \PR E}  {\bf 81} 036401

\bibitem{taylor1974} Taylor J B 1974 {\it \PRL} {\bf 33} 1139

\bibitem{wilmotsmith2010} Wilmot-Smith A L, Pontin D I and Hornig G 2010 {\it Astron. Astrophys.} {\bf 516} A5

\bibitem{pontin2010} Pontin D I, Wilmot-Smith A L, Hornig G  and Galsgaard K 2011 {\it Astron. Astrophys.} {\bf 525} A57

\bibitem{yeates2010} Yeates A R, Hornig G and Wilmot-Smith A L 2010 {\it \PRL}  {\bf 105} 085002

\bibitem{meiss1992} Meiss J D 1992 {\it Rev. Mod. Phys.}  {\bf 64} 795

\bibitem{cary1983} Cary J R and Littlejohn R G 1983 {\it Ann. Phys.} {\bf 151} 1

\bibitem{polymilis2003} Polymilis C, Servizi G, Skokos Ch, Turchetti G and Vrahatis M N 2003, {\it Chaos} {\bf 13} 94

\bibitem{brown1994} Brown R F and Greene R E 1994 {\it Am. Math. Monthly} {\bf 101} 39

\bibitem{ma2001} Ma T and Wang S 2001 {\it Nonlinear Anal. Real World Appl.}  {\bf 2} 467

\bibitem{simon1974} Simon C P 1974 {\it Invent. Math.}  {\bf 26} 187

\bibitem{hopf1929} Hopf H 1929 {\it Math. Z.}  {\bf 29} 493

\bibitem{graff2005} Graff G and Marzantowicz W 2005 {\it Folia Mathematica} {\bf 12} 3

\bibitem{dold1983} Dold A 1983 {\it Invent. Math.} {\bf 74} 419

\bibitem{graff2003} Graff G and Nowak-Przygodzki P 2003 {\it Topol. Methods in Nonlin. Anal.}  {\bf 22} 159

\bibitem{demelo1993} de Melo W and van Strien S 1993 {\it One-dimensional dynamics} (Berlin: Springer-Verlag)

\bibitem{wilmotsmith2009} Wilmot-Smith A L, Hornig G and Pontin D I 2009 {\it Astrophys. J.} {\bf 696} 1339

\bibitem{arber2001} Arber T D, Longbottom A W, Gerrard C L and Milne A M 2001 {\it J. Comp. Phys.} {\bf 171} 151

\end{thebibliography}
\end{document}